\newtheorem{assumption}{Assumption}
\newtheorem{remark}{Remark}
\newtheorem{proposition}{Proposition}
\DeclareMathOperator*{\card}{card}
\DeclareMathOperator*{\supp}{supp}
\DeclareMathOperator*{\argmax}{arg\,max}
\title{\LARGE \bf
A Multi-Observer Approach for Attack Detection and Isolation of Discrete-Time Nonlinear Systems
}
	\author{Tianci Yang, Carlos Murguia, Margreta Kuijper, Dragan  Ne\v{s}i\'{c} 
		\thanks{This work was supported by the Australian Research Council under the Discovery Project DP170104099.}
		\thanks{The authors are with the Department of Electrical and Electronics Engineering, the University of Melbourne, Australia.
			{\tt\small tianciy@student.unimelb.edu.au}}
	}
\begin{document}

\maketitle
\thispagestyle{empty}
\pagestyle{empty}

\begin{abstract}
We address the problem of attack detection and isolation for a class of discrete-time nonlinear systems under (potentially unbounded) sensor attacks and measurement noise. We consider the case when a subset of sensors is subject to additive false data injection attacks. Using a bank of observers, each observer leading to an Input-to-State Stable (ISS) estimation error, we propose two algorithms for detecting and isolating sensor attacks. These algorithms make use of the ISS property of the observers to check whether the trajectories of observers are ``consistent'' with the attack-free trajectories of the system. Simulations results are presented to illustrate the performance of the proposed algorithms.
\end{abstract}

\section{Introduction}
%
%
%
%
Traditional control systems composed of interconnected controllers, sensors, and actuators use point-to-point communication architectures. This is no longer suitable when new requirements -- such as modularity, decentralisation of control, integrated diagnostics, quick and easy maintenance, and low cost -- are necessary. To meet these requirements, Networked Control Systems (NCSs) have emerged as a technology that combines control, communication, and computation, and offers the necessary flexibility to meet new demands in distributed and large scale systems.

Recently, security of NCSs has become a very important issue as wireless communication networks might serve as new access points for adversaries trying to disrupt the system dynamics. Cyber-physical attacks on control systems have caused substantial damage to a number of physical processes. One of the most well-known examples is the attack on Maroochy Shire Council's sewage control system in Queensland, Australia that happened in January 2000. The attacker hacked into the controllers that activate and deactivate valves and caused flooding of the grounds of a hotel, a park, and a river with a million liters of sewage. Another incident is the very recent SuxNet virus that targeted Siemens' supervisory control and data acquisition systems which are used in many industrial processes. These incidents show that strategic mechanisms to identify and deal with attacks on NCSs are strongly needed.

In \cite{Fawzi2012}-\nocite{Massoumnia1986}\nocite{Mo2014}\nocite{Vamvoudakis2014}\nocite{Chong2016b}\nocite{Vamvoudakis2012}\nocite{Shoukry2014}\nocite{Yong}\nocite{Park2015}\nocite{Liu2009}\nocite{Teixeira2012b}\nocite{Murguia2016}\nocite{Dolk1}\nocite{Hashemil2017}\nocite{Pasqualetti123}\nocite{Murguia2017d}\nocite{Jairo}\nocite{Carlos_Justin2}\nocite{Sahand2017}\nocite{Carlos_Justin3}\nocite{Carlos_Iman1}\cite{Carlos_Iman2}, a range of topics related to security of control systems have been discussed. In general, they provide analysis tools for quantifying the performance degradation induced by different classes of attacks and propose reaction strategies to counter their effect on the system dynamics. Most of the existing work, however, has considered control systems with linear dynamics, although in most engineering applications the dynamics of the plants being monitored and controlled is highly nonlinear. There are some results addressing the nonlinear case though. In \cite{Kim2016a}, exploiting sensor redundancy, the authors address the problem of sensor attack detection and state estimation for uniformly observable continuous-time nonlinear systems. Similarly, in \cite{Yang2018a}, the problem of state estimation and attack isolation for a class of noisy discrete-time nonlinear system is considered. In particular, the authors propose an observer-based estimator, using a bank of circle-criterion observers, which provides a robust estimate of the system state in spite of sensor attacks and measurement noise, and an estimator-based isolation algorithm without knowing the noise bounds. In this manuscript, we address the problem of attack detection and isolation of a class of discrete-time nonlinear systems in the presence of measurement noise and sensor attacks. We assume that bounds on the measurement noise and an upper bound on the number of attacked sensors are known. We consider the setting when the system has $p$ sensors, all of which are subject to measurement noise and up to $q<p$ of them are attacked. We assume that $q$ is known but the exact subset of sensors being attacked is unknown. Using a bank of the observers, each observer leading to an ISS estimation error, we propose two algorithms for detecting and isolating false data sensor attacks. These algorithms make use of the ISS property of the observers to check whether the trajectories of observers are ``consistent'' with the attack-free trajectories of the system. The main idea behind our algorithms is the following. Each observer in the bank is driven by a different subset of sensors. Thus, without attacks, the observers produce ISS estimation errors with respect to measurement noise only. For every pair of observers in the bank, we compute the largest difference between their estimates. If a pair of observers is driven by a subset of attack-free sensors, then the largest difference between their estimates is also ISS with respect to measurement noise only. However, if there are attacks on some of the sensors, the observers driven by those sensors might produce larger differences than the attack-free ones. These ideas work well under the assumption that less than $p/2$ sensors are attacked, i.e, $q < p/2$.\\[1mm]
\textit{Notation.}\\
We denote the set of real numbers by $\mathbb{R}$, the set of natural numbers by $\mathbb{N}$ , the set of integers by $\mathbb{Z}$, and $\mathbb{R}^{n\times m}$ the set of $n\times m$ matrices for any $m,n \in \mathbb{N}$. For any vector $v\in\mathbb{R}^{n}$, {$v_{J}$} denotes the stacking of all $v_{i}$, $i\in J$ and $J\subset \left\lbrace 1,\ldots,n\right\rbrace$, $|v|=\sqrt{v^{\top} v}$ and $\supp(v)=\left\lbrace i\in\left\lbrace 1,\ldots,n\right\rbrace |v_{i}\neq0\right\rbrace $. For a sequence of vectors $\left\lbrace v(k)\right\rbrace _{k=0}^{\infty}$, we denote by $v_{[0,k]}$ the sequence of vectors $v(i)$, $i=0,\ldots,k$,  $||v||_{\infty}\triangleq \sup_{k\geq 0}|v(k)|$ and $||v||_{T}\triangleq \sup_{0\leq k\leq T}|v(k)|$. We say a sequence $\left\lbrace v(k)\right\rbrace \in l_{\infty}$ if $||v||_{\infty}<\infty$. The binomial coefficient is denoted as $\binom{a}{b}$, where $a,b$ are nonnegative integers. We denote the cardinality of a set $S$ as $\card(S)$.
We denote a variable $m$ uniformly distributed in the interval $(a,b)$ as $m\sim\mathcal{U}(a,b)$.\\

\section{Detection and isolation of sensor attacks}
In this section, we consider a class of discrete-time nonlinear systems subject to sensor attacks and measurement noise. This class of systems has been considered in \cite{Ibrir2007}-\nocite{Sundaram2016}\nocite{Zemouche2006}\cite{Abbaszadeh2008a} in the attack-free case. Consider the system:
\begin{eqnarray}
x^{+}&=&Ax+Gf(Hx)+\rho(u,y)\label{3},\\
\tilde{y}&=&\tilde{C}x+a+\tilde{m},\label{16}
\end{eqnarray}
with state $x\in\mathbb{R}^{n}$, sensor measurement $\tilde{y}\in\mathbb{R}^{p}$, measurement noise $\tilde{m}\in\mathbb{R}^{p}$ satisfying $\left\lbrace \tilde{m}(k)\right\rbrace \in l_{\infty}$, matrices $A \in \mathbb{R}^{n\times n}$, $G \in \mathbb{R}^{n\times r}$, and $H\in\mathbb{R}^{r\times n}$, and attack vector $a\in\mathbb{R}^{p}$. If sensor $i\in\left\lbrace 1,\ldots,p\right\rbrace$ is not attacked, then the $i$-th component of $a(k)$ satisfies $a_{i}(k)=0$ for all $k \geq 0$. Otherwise, sensor $i$ is attacked and $a_{i}(k)$ is arbitrary and possibly unbounded. We denote $W \subseteq\left\lbrace 1,\ldots, p\right\rbrace $ the set of attacked sensors and thus $\supp(a(k)) = W$ for all $k\geq 0$. We assume the set $W$ is fixed and unknown to us. 
The term $\rho(u,y)$ is a known arbitrary real-valued vector that depends on the system inputs and outputs. The state-dependent nonlinearity $f(Hx)$ is an $r$-dimensional vector where each entry is a function of a linear combination of the states:
\begin{equation}\label{230}
f_{i}=f_{i}\left( \sum_{j=1}^{n}H_{ij}x_{j}\right) ,\quad i=1,\ldots,r,
\end{equation}
with $H_{ij}$ denotes the entries of matrix $H$.

Let $q$ be the largest integer such that for each subset of sensors $J\subset\left\lbrace 1,\ldots, p\right\rbrace $ with $\card(J)\geq p-2q>0$ an observer of the form:
\begin{eqnarray}
\begin{split}
\hat{x}_{J}^{+}=&A\hat{x}_{J}+Gf(H\hat{x}_{J}+K_{J}(\tilde{C}_{J}\hat{x}_{J}-\tilde{y}_{J}))\\
&+L_{J}(\tilde{C}_{J}\hat{x}_{J}-\tilde{y}_{J})+\rho(u,y),\label{88}
\end{split}
\end{eqnarray}
exists for $\tilde{y}_{J}\in\mathbb{R}^{\card(J)}$. Here, $\hat{x}_{J}\in\mathbb{R}^{n}$ denotes the estimate of $x$ from $\tilde{y}_{J}$, and $K_{J}\in\mathbb{R}^{r\times\card(J)}$ and $L_{J}\in\mathbb{R}^{n\times\card(J)}$ are the corresponding observer matrices. The matrix $\tilde{C}_{J}$ is the stacking of all $\tilde{C}_{i}$, $i\in J$, where $\tilde{C}_{i}$ denotes the $i$-th row of $\tilde{C}$. Define the estimation error $e_{J}(k):=\hat{x}_{J}(k)-x(k)$. We assume the following.
\vspace{1mm}
\begin{assumption}\label{dd}
	If $a_{J}(k)=0$, there exist constants $c_{J}>0$, $\lambda_{J}\in(0,1)$, and $\gamma_{J}\geq0$ satisfying:
	\begin{equation}
	|e_{J}(k)|\leq c_{J}\lambda^{k}_{J}|e_{J}(0)|+\gamma_{J}||\tilde{m}_{J}||_{k},\label{233}
	\end{equation}
for $k \geq 0$, $e_{J}(0)\in\mathbb{R}^{n}$, and $\tilde{m}_{J}\in\mathbb{R}^{\card(J)}$, $\left\lbrace \tilde{m}_{J}(k)\right\rbrace \in l_{\infty}$.
\end{assumption}

\begin{remark}
 In this manuscript, we consider systems of the form \eqref{3}-\eqref{16} because under certain conditions on $f(\cdot)$, there exist tools -- based on the circle-criterion -- to construct observers of the form \eqref{88} satisfying Assumption \ref{dd}. In particular, we use the result in \cite{Yang2018a}, where the design method is posed as the solution of semidefinite programs.
\end{remark}

\begin{assumption} \label{1001} At most $q$ sensors are attacked, i.e.,
	\begin{equation}
	\mathbf{\card}(W)\leq q,
	\end{equation}
and $q>0$ is a known integer.
\end{assumption}
\vspace{1mm}
\begin{assumption}\label{ll}
The bound on measurement noise is known, i.e.,
	\begin{equation}
		||\tilde{m}||_{\infty}=\bar{m},
	\end{equation}
	and $\bar{m}>0$ is a known constant.
\end{assumption}
\vspace{2mm}

We aim at detecting and isolating sensor attacks on system (\ref{3})-(\ref{16}) for attacks and noise satisfying Assumption \ref{1001} and Assumption \ref{ll}, respectively, and observers of the form \eqref{88} satisfying Assumption \ref{dd}.

\subsection{Detection of sensor attacks}
We construct an observer satisfying Assumption \ref{dd} for system (\ref{3})-(\ref{16}), i.e., considering all sensors, and for each subset $J\subset\left\lbrace 1,\ldots,p\right\rbrace $ of sensors with $\card(J)=p-q$. The obtained estimates are denoted as $\hat{x}$ and $\hat{x}_{J}$, respectively. Define $e=\hat{x}-x$ and let $a=0$; then, under Assumption \ref{dd}, there exist $c>0$, $\lambda\in(0,1)$, and $\gamma\geq0$ such that
$
|e(k)|\leq c\lambda^{k}|e(0)|+\gamma||\tilde{m}||_{k},
$
for all $e(0)\in\mathbb{R}^{n}$, $k\geq 0$, and $\tilde{m}\in\mathbb{R}^{p}$, $\left\lbrace \tilde{m}(k)\right\rbrace \in l_{\infty}$. Because $\lambda\in(0,1)$, it can be easily verified that, for every $\epsilon>0$, there exist $k^{*}$ such that
$
	c\lambda^{k}|e(0)|\leq\epsilon,
$
for all $k\geq k^{*}$, which implies
$
|e(k)|\leq \epsilon+\gamma||\tilde{m}||_{k}\leq\epsilon+\gamma\bar{m},
$
for $k\geq k^{*}$. Also, for each subset $J\subset\left\lbrace 1,\ldots,p\right\rbrace $ with $\card(J)=p-q$, if $a_{J}=0$, there exist $c_{J}>0$, $\lambda_{J}\in(0,1)$, and $\gamma_{J}\geq0$ such that
$
|e_{J}(k)|\leq c_{J}\lambda_{J}^{k}|e(0)|+\gamma_{J}||\tilde{m}_{J}||_{k},
$
for all $e(0)\in\mathbb{R}^{n}$, $k\geq 0$, and $\tilde{m}_{J}\in\mathbb{R}^{p-q}$, $\left\lbrace \tilde{m}_{J}(k)\right\rbrace \in l_{\infty}$. Because $\lambda_{J}\in(0,1)$, there exists $k_{J}^{*}$ such that
$
c_{J}\lambda_{J}^{k}|e(0)|\leq\epsilon,
$
for all $k\geq k_{J}^{*}$, and thus
$
|e_{J}(k)|\leq\epsilon+\gamma_{J}||m_{J}'||_{k}\leq\epsilon+\gamma_{J}\bar{m},
$
for $k\geq k_{J}^{*}$. Let
\[
k^{\star}:=\underset{J\subset\left\lbrace 1,\ldots,p\right\rbrace :\card(J)=p-q}{\max}\left\lbrace k^{*},k_{J}^{*}\right\rbrace,
\]
and define
\begin{equation}\label{ddd}
	\pi(k):=\max_{J\subset\left\lbrace 1,\ldots,p\right\rbrace :\card(J)=p-q}|\hat{x}(k)-\hat{x}_{J}(k)|.
\end{equation}
Let
$J(k)=\argmax_{J\subset\left\lbrace 1,\ldots,p\right\rbrace :\card(J)=p-q}|\hat{x}(k)-\hat{x}_{J}(k)|,
$
 for all $k\geq k^{\star}$. Then, if sensors are attack-free, i.e. $a=0$, we have
\begin{eqnarray}
	\pi(k)&=&|\hat{x}(k)-\hat{x}_{J(k)}(k)|\nonumber\\
	&=&|\hat{x}(k)-x(k)+x(k)-\hat{x}_{J(k)}(k)|\nonumber\\
	&=&|e(k)-e_{J(k)}(k)|\nonumber\\
	&\leq&|e(k)|+|e_{J(k)}(k)|\nonumber\\
	&\leq&2(\epsilon+\bar{\gamma}\bar{m})\label{jj},
\end{eqnarray}
for all $k\geq k^{\star}$, where
\[
\bar{\gamma}:=\underset{J\subset\left\lbrace 1,\ldots,p\right\rbrace :\card(J)=p-q}{\max}\left\lbrace \gamma,\gamma_{J}\right\rbrace.
\]
However, if sensors are under attack, i.e., $a\neq 0$; then, the estimates $\hat{x}(k)$ and $\hat{x}_{J}(k)$ in $\pi(k)$ are likely to be inconsistent and thus lead to larger $\pi(k)$ than the attack-free case. Define
\begin{equation}\label{eee}
	\bar{z} := 2(\epsilon+\bar{\gamma}\bar{m});
\end{equation}
then, $\bar{z}$ can be used as a threshold to detect sensor attacks for $k\geq k^{\star}$. However, it is still possible that for some $k\geq k^{\star}$ and $a_{k}\neq 0$, inequality (\ref{jj}) still holds, which would result in non detection. Then, to improve the detection rate, we perform the detection over windows of $N \in \mathbb{N}$ time-steps. That is, for each $k \in [k^{\star}+(i-1)N,k^{\star}+iN]$, $i \in \mathbb{N}$, we compute $\pi(k)$ and compare it with $\bar{z}$ for every $k$ in the window. If there exists  $k_{1} \in [k^{\star}+(i-1)N,k^{\star}+iN]$, $i \in \mathbb{N}$ such that $\pi(k_{1})>\bar{z}$, then we say that sensors are under attack in the $i$-th window. Otherwise, we say sensors are attack-free in this window. This is formally stated in Algorithm \ref{alg:the_alg2}.

\begin{algorithm}
	\caption{Attack Detection.}
	\label{alg:the_alg2}
	\begin{algorithmic}[1]
		\State \text{Design an observer satisfying Assumption \ref{dd} for system} \text{(\ref{3})-(\ref{16}) and for each subset $ J\subset\left\lbrace 1,\ldots,p\right\rbrace$ with } \text{$\card(J)=p-q$}.
		\State \text{Fix the window size $N\in\mathbb{N}$}.
		\State \text{Calculate} $\bar{z}$ as in (\ref{eee}).
		\State \text{For $i\in\mathbb{Z}_{>0}$, calculate $\pi(k)$ for} \text{$k\in \left[ k^{\star}+(i-1)N, k^{\star}+iN-1\right] $.}
		\State \text{For $ i\in\mathbb{Z}_{>0}$, if $\exists$ $k_{1}\in \left[ k^{\star}+(i-1)N,k^{\star}+iN-1\right] $} \text{such that $\pi(k_{1})>\bar{z}$,
		then sensor attacks occurs in the} \text{$i$-th window, and}
		\begin{equation*}
		detection(i)=1;
		\end{equation*}
		\text{otherwise, sensors are attack-free in the $i$-th window,} \text{and}
		\begin{equation*}
		detection(i)=0.
		\end{equation*}
		\State \text{Return $detection(i)$}
	\end{algorithmic}
\end{algorithm}

Because our knowledge of $||m||_{\infty}$ might be conservative, we consider the case when the actual bound on measurement noise is smaller than $\bar{m}$, i.e., $||m||_{\infty}=\tau \bar{m}$ and $\tau\in(0,1)$. We give a sufficient condition under which sensor attacks cannot be detected by Algorithm \ref{alg:the_alg2} in the $i$-th time window for a given $N>0$.
\begin{proposition}
	Given a time window length $N>0$, if
	\begin{eqnarray}\label{iiii}
	||a||_{k^{\star}+iN-1}\leq(1-\tau)\bar{m};
	\end{eqnarray}
	where $\tau\in(0,1)$, then, $
	\pi(k)\leq\bar{z}
	$ for all $k\in \left[ k^{\star}+(i-1)N, k^{\star}+iN-1\right]$ and sensor attacks cannot be detected by Algorithm \ref{alg:the_alg2} in the $i$-th time window.	
\end{proposition}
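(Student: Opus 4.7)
The plan is to observe that the observer dynamics (\ref{88}) depend on the output through $\tilde{y}_J$ additively, so in the attacked regime the observer effectively sees the corrupted measurement $\tilde{y}_J = \tilde{C}_Jx + (\tilde{m}_J+a_J)$; hence the attack term can be absorbed into the measurement noise, and Assumption~\ref{dd} can be invoked with the composite disturbance $\tilde{m}_J'(k):=\tilde{m}_J(k)+a_J(k)$ playing the role that $\tilde{m}_J(k)$ plays in the attack-free case. The proposition should then reduce to the attack-free derivation of (\ref{jj}), provided that $\|\tilde{m}_J'\|_k$ stays bounded by the nominal $\bar m$ throughout the window.

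First, I would bound the composite disturbance. Using the triangle inequality on the finite-horizon supremum norm together with $\|\tilde{m}\|_\infty = \tau\bar m$ and the hypothesis (\ref{iiii}), for any $k\in[k^{\star}+(i-1)N,\,k^{\star}+iN-1]$ and any $J$ with $\card(J)=p-q$,
\begin{equation*}
\|\tilde{m}_J'\|_k \le \|\tilde{m}_J\|_k+\|a_J\|_k \le \tau\bar m + (1-\tau)\bar m = \bar m.
\end{equation*}
The identical bound holds for the full-sensor observer, whose effective noise is $\tilde{m}+a$.

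Second, I would replay the attack-free argument used to obtain (\ref{jj}), with $\tilde{m}_J$ replaced by $\tilde{m}_J'$ (and $\tilde{m}$ replaced by $\tilde{m}+a$). Because the observer has had $k\ge k^{\star}$ steps to settle, the exponential terms $c\lambda^k|e(0)|$ and $c_J\lambda_J^k|e_J(0)|$ are each at most $\epsilon$; combining this with the ISS gain times the bound above gives $|e(k)|\le \epsilon+\gamma\bar m$ and $|e_{J(k)}(k)|\le \epsilon+\gamma_{J(k)}\bar m$. The triangle inequality applied to $\pi(k)=|e(k)-e_{J(k)}(k)|$ then yields
\begin{equation*}
\pi(k)\ \le\ 2(\epsilon+\bar\gamma\bar m)\ =\ \bar z
\end{equation*}
for every $k\in[k^{\star}+(i-1)N,\,k^{\star}+iN-1]$. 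By Step~5 of Algorithm~\ref{alg:the_alg2}, this forces $detection(i)=0$, i.e., the attack is not flagged in the $i$-th window.

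The only non-routine point, and the step I would flag explicitly, is the legitimacy of absorbing $a_J$ into the noise slot of Assumption~\ref{dd}. This is not stated as an assumption but is apparent from the structure of (\ref{88}): $\hat x_J^+$ depends on $\tilde y_J$ only as an additive argument of $K_J$ and $L_J$, so repeating the ISS derivation that underlies Assumption~\ref{dd} with $\tilde m_J$ everywhere replaced by $\tilde m_J+a_J$ is immediate. Everything else is bookkeeping: the triangle inequality for the window supremum, the choice of $\epsilon$ and $k^{\star}$ inherited from the preceding discussion, and the definition (\ref{eee}) of the threshold $\bar z$.
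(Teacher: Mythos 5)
Your proof is correct and follows essentially the same route as the paper's: the paper likewise bounds $|e(k)|$ and $|e_{J}(k)|$ by treating the attack as additional measurement noise inside the ISS bound, arriving at $\pi(k)\leq 2(\epsilon+\bar{\gamma}(\tau\bar{m}+||a||_{k^{\star}+iN-1}))\leq\bar{z}$ under (\ref{iiii}). You are in fact more explicit than the paper about the one delicate point --- that Assumption \ref{dd} is stated only for $a_{J}=0$ and must be re-invoked with the composite disturbance $\tilde{m}_{J}+a_{J}$ --- a step the paper uses silently.
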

\begin{proof}
	For a given time window length $N>0$, sensor attacks cannot be detected by Algorithm \ref{alg:the_alg2} in the $i$-th time window for $a\neq 0$, if we have
	\begin{equation}\label{dddd}
	\pi(k)\leq\bar{z},
 	\end{equation}
	for all $k\in \left[ k^{\star}+(i-1)N, k^{\star}+iN-1\right]$. For $a\neq 0$, we have
    \begin{eqnarray}
    \begin{split}
        	\pi(k)\leq&|e(k)+|e_{J}(k)|\nonumber\\
    \leq&2\epsilon+\gamma(||\tilde{m}||_{ k^{\star}+iN-1}+||a||_{ k^{\star}+iN-1})\nonumber\\
    &+\gamma_{J}(||\tilde{m}||_{k^{\star}+iN-1}+||a_{J}||_{ k^{\star}+iN-1})\nonumber\nonumber\\
    \leq&2(\epsilon+\bar{\gamma}(||\tilde{m}||_{\infty}+||a||_{ k^{\star}+iN-1})\nonumber\\
    \leq&2(\epsilon+\bar{\gamma}(\tau\bar{m}+||a||_{k^{\star}+iN-1})),
    \end{split}
    \end{eqnarray}
    for all $k\in \left[ k^{\star}+(i-1)N, k^{\star}+iN-1\right]$. It follows that the inequality (\ref{dddd}) is satisfied for $a$ satisfying (\ref{iiii}) for all \linebreak $k\in \left[ k^{\star}+(i-1)N, k^{\star}+iN-1\right]$ and thus sensor attacks cannot be detected by Algorithm \ref{alg:the_alg2} in the $i$-th time window.
\end{proof}

Next, we give a sufficient condition under which sensor attacks can always be detected by Algorithm \ref{alg:the_alg2} in the $i$-th time window for a given $N>0$.
\begin{proposition}\label{ss}
	For a given time window length $N>0$, if there exist $k_{1}\in \left[ k^{\star}+(i-1)N,k^{\star}+iN-1\right]$ such that
	\begin{equation}\label{xx}
	|e(k_{1})|> 3(\epsilon+\bar{\gamma}\bar{m});
	\end{equation}
	 then,
	 $
	 \pi(k_{1})>\bar{z}
	 $ and thus sensor attacks can be detected by Algorithm \ref{alg:the_alg2} in the $i$-th time window.
\end{proposition}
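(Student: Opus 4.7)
The plan is to exhibit, within the max defining $\pi(k_1)$, a single subset $J^\ast$ that by construction avoids every attacked sensor, and then use the triangle inequality to lower-bound $\pi(k_1)$ by $|e(k_1)|$ minus the (small) error of the observer driven by $J^\ast$.

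First I would invoke Assumption~\ref{1001}: since $\card(W)\le q$, the complement $\{1,\ldots,p\}\setminus W$ has at least $p-q$ elements, so I can choose a subset $J^\ast\subset\{1,\ldots,p\}$ with $\card(J^\ast)=p-q$ and $J^\ast\cap W=\emptyset$. For this choice we have $a_{J^\ast}(k)=0$ for all $k\ge 0$, so the observer~\eqref{88} associated with $J^\ast$ satisfies the ISS bound~\eqref{233} from Assumption~\ref{dd}. Exactly as in the derivation preceding~\eqref{ddd}, this gives
\begin{equation*}
|e_{J^\ast}(k)|\le \epsilon + \gamma_{J^\ast}\bar m \le \epsilon + \bar\gamma\bar m,
\end{equation*}
for all $k\ge k^\star$, where I use $\|\tilde m_{J^\ast}\|_k\le \|\tilde m\|_\infty=\bar m$ together with $k^\star\ge k_{J^\ast}^\ast$.

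Next I would bound $\pi(k_1)$ from below using $J^\ast$ as a particular (not necessarily optimal) choice in the max. By the definition of $\pi$ and the triangle inequality,
\begin{equation*}
\pi(k_1)\;\ge\;|\hat x(k_1)-\hat x_{J^\ast}(k_1)|\;=\;|e(k_1)-e_{J^\ast}(k_1)|\;\ge\;|e(k_1)|-|e_{J^\ast}(k_1)|.
\end{equation*}
Plugging in the hypothesis $|e(k_1)|>3(\epsilon+\bar\gamma\bar m)$ and the bound on $|e_{J^\ast}(k_1)|$ yields
\begin{equation*}
\pi(k_1)\;>\;3(\epsilon+\bar\gamma\bar m)-(\epsilon+\bar\gamma\bar m)\;=\;2(\epsilon+\bar\gamma\bar m)\;=\;\bar z,
\end{equation*}
which, by Step~5 of Algorithm~\ref{alg:the_alg2}, forces $\mathit{detection}(i)=1$.

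There is no real obstacle here: the argument is essentially a pigeonhole step (an uncorrupted subset of size $p-q$ must exist) followed by a one-line application of the triangle inequality. The only point to be careful about is conceptual rather than technical, namely that the bound~\eqref{jj} on $|e_{J^\ast}(k)|$ is legitimate for all $k\ge k^\star$ because $k^\star$ was defined as the max over all $(p-q)$-subsets of the corresponding transient times, and that the constant $3$ in the hypothesis is precisely what is needed to cover $\bar z=2(\epsilon+\bar\gamma\bar m)$ plus the residual estimation error $\epsilon+\bar\gamma\bar m$ of the attack-free observer.
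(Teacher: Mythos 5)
Your proof is correct and follows essentially the same route as the paper's: both select an attack-free subset of cardinality $p-q$ via Assumption~\ref{1001}, bound its estimation error by $\epsilon+\bar\gamma\bar m$ for $k\ge k^\star$, and apply the (reverse) triangle inequality to the lower bound $\pi(k_1)\ge|e(k_1)-e_{J^\ast}(k_1)|$ to conclude $\pi(k_1)>\bar z$. No differences worth noting.
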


\begin{proof}
 For a given time window length $N>0$, sensor attacks can be detected by Algorithm \ref{alg:the_alg2} in the $i$-th time window for $a\neq 0$, if there exist $k_{1}\in \left[ k^{\star}+(i-1)N,k^{\star}+iN-1\right]$ such that
	$
	\pi(k_{1})>\bar{z}
	$. Since there are at most $q$ sensors under attack, we know there exist at least one $\bar{I}\subset\left\lbrace 1,\ldots,p\right\rbrace $ with $\card(\bar{I})=p-q$ such that $a_{\bar{I}}=0$, and
	\begin{equation}\label{cc}
		|e_{\bar{I}}(k)|\leq\epsilon+\gamma_{\bar{I}}||\tilde{m}_{\bar{I}}||_{k},
	\end{equation}
	for $k\geq k^{\star}$. From (\ref{ddd}), we know $\pi(k)\geq|e(k)-e_{\bar{I}}(k)|$ for $k\geq k^{\star}$. If (\ref{xx}) holds, then
	\begin{eqnarray}
		\pi(k_{1})&\geq&||e(k_{1})|-|e_{\bar{I}}(k_{1})||\nonumber\\
		&>&3(\epsilon+\bar{\gamma}\bar{m})-\epsilon-\gamma_{\bar{I}}||\tilde{m}_{\bar{I}}||_{k_{1}}\nonumber\\
		&>&2(\epsilon+\bar{\gamma}\bar{m}),\label{kk}
	\end{eqnarray}
	 which implies sensor attacks can be detected by Algorithm \ref{alg:the_alg2} in the $i$-th time window.
\end{proof}

\subsection{Isolation of sensor attacks}
To perform the isolation, we construct an observer satisfying Assumption \ref{dd} for each subset $J\subset\left\lbrace 1,\ldots,p\right\rbrace $ of sensors with $\card(J)=p-q$ and each subset $S\subset\left\lbrace 1,\ldots,p\right\rbrace $ of sensors with $\card(S)=p-2q$. Hence, by Assumption \ref{dd}, for $a_{S}(k)=0$, there exist $c_{S}>0$, $\lambda_{S}\in(0,1)$, and $\gamma_{S}\geq0$ satisfying
 \begin{equation}\label{61}
 |e_{S}(k)|\leq c_{S}\lambda_{S}^{k}|e(0)|+\gamma_{S}||\tilde{m}_{S}||_{k},
 \end{equation}
 for all $e(0)\in\mathbb{R}^{n}$ and $k\geq0$. Note that, because $\lambda_{S}\in(0,1)$, there always exist $k^{*}_{S}$ such that
$
 	c_{S}\lambda_{S}^{k}|e(0)|\leq\epsilon,
$
 for any $\epsilon>0$ and $k\geq k_{S}^{*}$. Define
 $\bar{k}^{*}:=\max_{J,S}\left\lbrace k_{J}^{*},k_{S}^{*}\right\rbrace.
$
For each subset $J$ with $\card(J)=p-q$, define $\pi_{J}(k)$ as
\begin{equation}\label{efg}
\pi_{J}(k) :=\max_{S\subset J:\card(S)=p-2q}|\hat{x}_{J}(k)-\hat{x}_{S}(k)|.
\end{equation}
Since there are at most $q$ sensors under attack, we know there exist at least one $\bar{I}\subset\left\lbrace 1,\ldots,p\right\rbrace $ with $\card(\bar{I})=p-q$ such that $a_{\bar{I}}=0$ and (\ref{cc}) is satisfied. Define
\begin{eqnarray}
\pi_{\bar{I}}(k)&:=&\underset{S\subset\bar{I}}{\max}|\hat{x}_{\bar{I}}(k)-\hat{x}_{S}(k)|\nonumber\\
&=&\underset{S\subset\bar{I}}{\max}|\hat{x}_{\bar{I}}(k)-x(k)+x(k)-\hat{x}_{S}(k)|\nonumber\\
&\leq&  |e_{\bar{I}}(k)|+\underset{S\subset\bar{I}}{\max}|e_{S}(k)|.
\end{eqnarray}
From (\ref{cc}) and (\ref{61}), we obtain
$
	\pi_{\bar{I}}(k)\leq 2(\epsilon+\gamma_{\bar{I}}'||\tilde{m}_{\bar{I}}||_{k}),
$
for all $k\geq\bar{k}^{*}$, where
\[
\gamma_{\bar{I}}':=\underset{S\subset\bar{I}:\card(S)=p-2q}{\max}\left\lbrace \gamma_{\bar{I}}, \gamma_{S}\right\rbrace.
\]
However, if the subset $J$ of sensors is under attack, i.e., $a_{J}\neq 0$, then $\hat{x}_{J}(k)$ and $\hat{x}_{S}(k)$ in $\pi_{J}(k)$ are more inconsistent and might produce larger $\pi_{J}(k)$. Define
\begin{equation}\label{gg}
	\bar{z}_{J}=2(\epsilon+\gamma_{J}'\bar{m}),
\end{equation}
for each $J\subset\left\lbrace 1,\ldots,p\right\rbrace $ with $\card(J)=p-q$, where
\[
\gamma_{J}':=\underset{S\subset J:\card(S)=p-2q}{\max}\left\lbrace \gamma_{J}, \gamma_{S}\right\rbrace;
\]
then, $\bar{z}_{J}$ can be used as a threshold to isolate attacked sensors. For all $k\geq\bar{k}^{*}$, we select out all the subsets $J\subset\left\lbrace 1,\ldots,p\right\rbrace $ with $\card(J)=p-q$ that satisfy
\begin{equation}\label{777}
\pi_{J}(k)\leq\bar{z}_{J}.
\end{equation}
Denote as $\bar{W}(k)$ the set of sensors that we regard as attack-free at time $k$. Then, $\bar{W}(k)$ is given as the union of all subsets $J$ such that (\ref{777}) holds:
\begin{equation}\label{ccc}
\bar{W}(k):=\underset{J\subset\left\lbrace 1,\ldots,p\right\rbrace :\card(J)=p-q,\pi_{J}(k)\leq\bar{z}_{J}}{\bigcup} J.
\end{equation}
Thus, the set $\left\lbrace 1,\ldots,p\right\rbrace \setminus\bar{W}(k)$ is isolated as the set of attacked sensors at time $k$. However, note that it is still possible that for some $k\geq\bar{k}^{*}$ and some $J\subset \left\lbrace 1,\ldots, p\right\rbrace $ with $\card(J)=p-q$, $a_{J}(k)\neq 0$ but (\ref{777}) still holds. This implies that $J\subset\bar{W}(k)$ even if $a_{J}\neq 0$ and would result in wrong isolation. Therefore, we perform the isolation over windows of $N \in \mathbb{N}$ time-steps. That is, for each $k \in [\bar{k}^{*}+(i-1)N,\bar{k}^{*}+iN]$, $i \in \mathbb{N}$, we compute and collect $\bar{W}(k)$ for every $k$ in the window and select the subset $J$ with $\card(J)\geq p-q$ that is equal to $\bar{W}(k)$ most often in the $i$-th window. We denote this $J$ as $J(i)$. Then, we select $\left\lbrace 1,\ldots,p\right\rbrace \setminus J(i)$ as the set of sensors under attack in the $i$-th window. This is formally stated in Algorithm \ref{alg:alg_3}.

\begin{algorithm}
	\caption{Attack Isolation.}
	\label{alg:alg_3}
	\begin{algorithmic}[1]
		\State \text{Design an observer satisfying Assumption \ref{dd} for each} \text{subset $J\subset\left\lbrace 1,\ldots,p\right\rbrace $ with $\card(J)=p-q$ and each} \text{subset $S\subset\left\lbrace 1,\ldots,p\right\rbrace $ with $\card(S)=p-2q$ }.
		\State \text{Intialize the counter variable $n_{J}(i)=0$ for all $J$} \text{with $\card(J)\geq p-q$ and all $i\in\mathbb{Z}_{>0}$}.
		\State \text{Calculate $\bar{z}_{J}$ for each $J$ with $\card(J)=p-q$ as (\ref{gg}).}
		\State \text{For $i\in\mathbb{Z}_{>0}$ and $\forall k\in \left[ \bar{k}^{*}+(i-1)N,\bar{k}^{*}+iN-1\right] $,} \text{calculate $\pi_{J}(k)$, $\forall J$ with $\card(J)=p-q$ as follows:}
		\begin{equation*}
		\pi_{J}(k)=\max_{S\subset J:\card(S)=p-2q}|\hat{x}_{J}(k)-\hat{x}_{S}(k)|.
		\end{equation*}
		\State  \text{For all $k\in \left[ \bar{k}^{*}+(i-1)N,\bar{k}^{*}+iN-1\right] $, take the} \text{union of all the subsets $J$ such that $\pi_{J}(k)\leq\bar{z}_{J}$:}
		\begin{equation*}
		\bar{W}(k)=\underset{J\subset\left\lbrace 1,\ldots,p\right\rbrace :\card(J)=p-q,\pi_{J}(k)\leq\bar{z}_{J}}{\bigcup} J,
		\end{equation*}
		\State \text{For all $k\in \left[ \bar{k}^{*}+(i-1)N,\bar{k}^{*}+iN-1\right] $, if} \text{$\bar{W}(k)=J$ for some $J$ with $\card(J)\geq p-q$, then} \text{update its corresponding counter variable as follows:}
		\begin{equation*}
		n_{J}(i)=n_{J}(i)+1.
		\end{equation*}
		\State \text{For all $i\in\mathbb{Z}_{>0}$, select the subset $J$ with $\card(J)\geq$} \text{$p-q$ that is equal to $\bar{W}(k)$ most often, i.e.,}
		\begin{equation*}
		J(i)=\underset{J\in\left\lbrace 1,\ldots,p\right\rbrace :\card(J)\geq p-q}{\argmax} n_{J}(i).
		\end{equation*}
		\State \text{For all $i\in\mathbb{Z}_{>0}$, the set of sensors potentially under} \text{attack is given as:}
		\begin{equation*}
		\tilde{A}(i) = \left\lbrace 1,\ldots,p\right\rbrace \setminus J(i).
		\end{equation*}
		\State \text{For all $i\in\mathbb{Z}_{>0}$, return $\tilde{A}(i)$.}
	\end{algorithmic}
\end{algorithm}
Next, we give a sufficient condition under which none of the attacked sensors can be isolated by Algorithm \ref{alg:alg_3} in the $i$-th time window for a given $N>0$ when $||m||_{\infty}=\tau\cdot\bar{m}$ where $\tau\in(0,1)$.
\begin{proposition}
	Given a time window length $N>0$, if
	\begin{equation}\label{ggg}
		||a||_{\bar{k}^{*}+iN-1}\leq (1-\tau)\bar{m};
	\end{equation}
		where $\tau\in(0,1)$, then, for all $J\subset\left\lbrace 1,\ldots,p\right\rbrace $ with $\card(J)=p-q$ and $a_{J}\neq 0$,
	$
	\pi_{J}(k)\leq\bar{z}_{J}$
	for all $k\in \left[ \bar{k}^{*}+(i-1)N,\bar{k}^{*}+iN-1\right] $ and none of attacked sensors can be isolated by Algorithm \ref{alg:alg_3} in the $i$-th time window.	
\end{proposition}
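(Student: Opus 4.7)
The plan is to mirror closely the proof of the first proposition, but applied to the isolation quantity $\pi_{J}(k)$ defined in \eqref{efg} and the threshold $\bar{z}_{J}$ in \eqref{gg}, and to show that under the stated hypothesis, the inequality $\pi_{J}(k)\leq \bar{z}_{J}$ holds simultaneously for \emph{every} $J\subset\{1,\ldots,p\}$ with $\card(J)=p-q$, including those containing attacked sensors. Once this is established, the union in \eqref{ccc} collapses to the whole index set, $\bar{W}(k)=\{1,\ldots,p\}$ for every $k$ in the window, and Algorithm \ref{alg:alg_3} returns $J(i)=\{1,\ldots,p\}$ so that $\tilde{A}(i)=\emptyset$, i.e., no attacked sensor is flagged.

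First, I would fix an arbitrary $J$ with $\card(J)=p-q$ and an arbitrary $S\subset J$ with $\card(S)=p-2q$, and write $\hat{x}_{J}(k)-\hat{x}_{S}(k)=e_{J}(k)-e_{S}(k)$, so that by the triangle inequality $|\hat{x}_{J}(k)-\hat{x}_{S}(k)|\leq |e_{J}(k)|+|e_{S}(k)|$. Next, exactly as in the proof of Proposition 1, I would invoke the ISS-type bound of Assumption \ref{dd} with the sensor attack treated as an additive component of the measurement input to the observer; this yields, for $k\geq \bar{k}^{*}$, bounds of the form $|e_{J}(k)|\leq \epsilon+\gamma_{J}(\|\tilde{m}_{J}\|_{k}+\|a_{J}\|_{k})$ and $|e_{S}(k)|\leq \epsilon+\gamma_{S}(\|\tilde{m}_{S}\|_{k}+\|a_{S}\|_{k})$, even when $a_{J}\neq 0$ or $a_{S}\neq 0$.

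Taking the maximum over $S\subset J$ with $\card(S)=p-2q$, bounding $\gamma_{J},\gamma_{S}\leq \gamma_{J}'$ by the definition in \eqref{gg}, and using $\|\tilde{m}_{J}\|_{k},\|\tilde{m}_{S}\|_{k}\leq \|\tilde{m}\|_{\infty}=\tau\bar{m}$ together with $\|a_{J}\|_{k},\|a_{S}\|_{k}\leq \|a\|_{\bar{k}^{*}+iN-1}\leq (1-\tau)\bar{m}$ for every $k\in[\bar{k}^{*}+(i-1)N,\bar{k}^{*}+iN-1]$, I obtain
\[
\pi_{J}(k)\;\leq\;2\bigl(\epsilon+\gamma_{J}'(\tau\bar{m}+(1-\tau)\bar{m})\bigr)\;=\;2(\epsilon+\gamma_{J}'\bar{m})\;=\;\bar{z}_{J},
\]
which holds for every $J$ with $\card(J)=p-q$, irrespective of whether $a_{J}$ vanishes. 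Plugging this into \eqref{ccc} gives $\bar{W}(k)=\{1,\ldots,p\}$ throughout the $i$-th window, so Step 7 of Algorithm \ref{alg:alg_3} selects $J(i)=\{1,\ldots,p\}$ and $\tilde{A}(i)=\emptyset$, establishing the claim.

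The only non-routine point in this argument is not a hard calculation but a modeling observation: Assumption \ref{dd} as written only guarantees the ISS bound when $a_{J}=0$, so one has to justify using an analogous bound when $a_{J}\neq 0$. The justification is exactly the one implicit in the proof of Proposition 1, namely that $a$ enters \eqref{16} additively alongside $\tilde{m}$, so the observer \eqref{88} sees the corrupted measurement $\tilde{m}_{J}+a_{J}$ as its effective disturbance and the ISS estimate extends by replacing $\|\tilde{m}_{J}\|_{k}$ with $\|\tilde{m}_{J}\|_{k}+\|a_{J}\|_{k}$. Beyond this step, the remainder of the proof is a direct algebraic verification.
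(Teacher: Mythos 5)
Your proof is correct and follows essentially the same route as the paper's: bound $\pi_{J}(k)\leq|e_{J}(k)|+|e_{S}(k)|$, extend the ISS estimate of Assumption \ref{dd} to the attacked case by treating $a_{J}$ as an additive component of the measurement disturbance, and verify that $\tau\bar{m}+(1-\tau)\bar{m}=\bar{m}$ keeps $\pi_{J}(k)$ below $\bar{z}_{J}$. You are slightly more careful than the paper in two respects --- explicitly flagging that Assumption \ref{dd} only covers $a_{J}=0$ and must be extended, and tracing the bound through to $\bar{W}(k)=\{1,\ldots,p\}$ and $\tilde{A}(i)=\emptyset$ --- but these are elaborations of the same argument, not a different one.
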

\newpage

\begin{proof}
	For given time window length $N>0$, none of attacked sensors can be isolated by Algorithm \ref{alg:alg_3} in the $i$-th window if $\forall J\subset\left\lbrace 1,\ldots,p\right\rbrace $ with $\card(J)=p-q$ and $a_{J}\neq 0$, we have
	$
	\pi_{J}(k)\leq\bar{z}_{J}$
	for all $k\in \left[ \bar{k}^{*}+(i-1)N,\bar{k}^{*}+iN-1\right] $. For all $k\in \left[ \bar{k}^{*}+(i-1)N,\bar{k}^{*}+iN-1\right] $ and $a_{J}\neq 0$, we have
	\begin{eqnarray}
	\begin{split}
		\pi_{J}(k)\leq&|e_{J}(k)+|e_{S}(k)|\nonumber\\[1mm]
	\leq&2\epsilon+\gamma_{J}(||\tilde{m}_{J}||_{\infty}+||a_{J}||_{\infty})\\[1mm]
	&+\gamma_{S}(||\tilde{m}_{S}||_{\bar{k}^{*}+iN-1}+||a_{S}||_{\bar{k}^{*}+iN-1})\nonumber\\[1mm]
	\leq&2\epsilon+2\gamma_{J}'(||\tilde{m}_{J}||_{\bar{k}^{*}+iN-1}+||a_{J}||_{\bar{k}^{*}+iN-1})\nonumber\\[1mm]
	\leq&2(\epsilon+\gamma'_{J}(\tau\bar{m}+||a||_{\bar{k}^{*}+iN-1})).
	\end{split}
	\end{eqnarray}
	If (\ref{ggg}) holds, then 	$
	\pi_{J}(k)\leq\bar{z}_{J}
	$ for all $J\subset\left\lbrace 1,\ldots,p\right\rbrace $ with $\card(J)=p-q$ and all $k\in \left[ \bar{k}^{*}+(i-1)N,\bar{k}^{*}+iN-1\right] $. Then, none of attacked sensors can be isolated by Algorithm \ref{alg:alg_3} in the $i$-th time window.
\end{proof}

\vspace{1mm}

Next, we give a sufficient condition under which all of attacked sensors can be isolated by Algorithm \ref{alg:alg_3} in the $i$-th time window for a given time window length $N>0$.

\newpage

\begin{proposition}
	 Given a time window length $N>0$, if for all $J\subset \left\lbrace 1,\ldots,p\right\rbrace $ with $\card(J)=p-q$ and $a_{J}\neq 0$, we have
	\begin{equation}\label{xxx}
	|e_{J}(k)|> 3(\epsilon+\gamma_{J}'\bar{m}),
	\end{equation}
	for at least $N/2$ time-steps in the $i$-th time window, then for all $J\subset \left\lbrace 1,\ldots,p\right\rbrace $ with $\card(J)=p-q$ and $a_{J}\neq 0$, we have
	$
	\pi_{J}(k)>\bar{z}_{J}
	$
	for at least $N/2$ time-steps in the $i$-th time window, and all of attacked sensors can be isolated by Algorithm \ref{alg:alg_3} in the $i$-th time window.
\end{proposition}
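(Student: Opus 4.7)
My plan is to mirror the argument of Proposition \ref{ss}, but applied locally to each ``bad'' subset $J$ of size $p-q$ with $a_{J}\neq 0$, and then to combine these per-$J$ bounds with a counting argument for the $\argmax$ step of Algorithm \ref{alg:alg_3}.

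First, for each fixed $J$ with $\card(J)=p-q$ and $a_{J}\neq 0$, I would exhibit a specific attack-free reference $S\subset J$ with $\card(S)=p-2q$ and $a_{S}=0$. Since at most $q$ sensors in $J$ are attacked by Assumption \ref{1001}, removing those indices and, if necessary, padding with arbitrary attack-free indices already in $J$ yields such an $S$. For $k\geq \bar{k}^{*}$, Assumption \ref{dd} applied to $S$ together with the definition of $\gamma_{J}'$ then gives $|e_{S}(k)|\leq \epsilon+\gamma_{S}\bar{m}\leq \epsilon+\gamma_{J}'\bar{m}$.

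The core estimate is a reverse-triangle bound on $\pi_{J}(k)$ that is the direct analogue of \eqref{kk}. Because the chosen $S\subset J$ appears inside the maximum in \eqref{efg}, I have $\pi_{J}(k)\geq|\hat{x}_{J}(k)-\hat{x}_{S}(k)|=|e_{J}(k)-e_{S}(k)|\geq|e_{J}(k)|-|e_{S}(k)|$. On each of the (at least $N/2$) time-steps for which hypothesis \eqref{xxx} holds, chaining this with the bound on $|e_{S}(k)|$ yields $\pi_{J}(k)>3(\epsilon+\gamma_{J}'\bar{m})-(\epsilon+\gamma_{J}'\bar{m})=2(\epsilon+\gamma_{J}'\bar{m})=\bar{z}_{J}$, which is the first conclusion of the proposition.

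For the isolation claim I would argue that, on each such good time-step $k$, the set $\bar{W}(k)$ produced in \eqref{ccc} coincides with $W^{c}:=\{1,\ldots,p\}\setminus W$: every $(p-q)$-subset $J$ with $a_{J}\neq 0$ is excluded from the union by the freshly established $\pi_{J}(k)>\bar{z}_{J}$, whereas every $J\subset W^{c}$ with $\card(J)=p-q$ is included by the attack-free computation that produced the bound on $\pi_{\bar{I}}(k)$. Since $\card(W^{c})\geq p-q$, the union of all such accepted $J$ is exactly $W^{c}$. Hence $\bar{W}(k)=W^{c}$ on at least $N/2$ of the $N$ steps in the window, so $n_{W^{c}}(i)\geq N/2$, while any other candidate $J'\neq W^{c}$ can accumulate at most $N/2$ counts; the $\argmax$ in Step~7 of Algorithm \ref{alg:alg_3} therefore returns $J(i)=W^{c}$, giving $\tilde{A}(i)=W$.

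The step I expect to be the main obstacle is the identification $\bar{W}(k)=W^{c}$: one must simultaneously rule every attacked $(p-q)$-subset out of the union \emph{and} verify that the union of the remaining accepted subsets is not strictly larger than $W^{c}$, since any strict inclusion would cause attacked sensors to be missed by $\tilde{A}(i)$. Once this identification is in hand, the majority-count argument on $n_{J}(i)$ is routine and the reverse-triangle estimate linking $|e_{J}(k)|$ to $\pi_{J}(k)$ is a direct imitation of the proof of Proposition \ref{ss}.
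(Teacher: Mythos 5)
Your proposal is correct and follows essentially the same route as the paper: it picks an attack-free $\bar{S}\subset J$ with $\card(\bar{S})=p-2q$ (which exists by Assumption~\ref{1001}), bounds $|e_{\bar{S}}(k)|$ via Assumption~\ref{dd}, and applies the reverse triangle inequality to $\pi_{J}(k)\geq|e_{J}(k)-e_{\bar{S}}(k)|$ to obtain $\pi_{J}(k)>2(\epsilon+\gamma_{J}'\bar{m})=\bar{z}_{J}$ on the same $N/2$ time-steps. Your concluding counting argument (identifying $\bar{W}(k)$ with the complement of $W$ and running the majority vote) is in fact more complete than the paper's proof, which stops at the threshold inequality and simply asserts isolation; the one residual issue you share with the paper is the possible tie when the hypothesis holds on exactly $N/2$ steps, in which case the $\argmax$ need not return $W^{c}$.
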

\begin{proof}
 Since there are at most $q$ sensors under attack, for each subset $J$ with $\card(J)=p-q$ we know there exist at least one $\bar{S}\subset J$ with $\card(\bar{S})=p-2q$ such that $a_{\bar{S}}=0$, and
	$
	|e_{\bar{S}}|\leq\epsilon+\gamma_{\bar{S}}||\tilde{m}_{\bar{S}}||_{k},
$
	for all $k\in \left[ \bar{k}^{*}+(i-1)N,\bar{k}^{*}+iN-1\right]$. By construction of (\ref{efg}), it is satisfied that
	\begin{align*}
	\pi_{J}(k) &= \underset{S\subset J\card(S)=p-2q}{\max}|\hat{x}_{J}(k)-\hat{x}_{S}(k)| \\
	&\geq |e_{J}(k)-e_{\bar{S}}(k)|.
	\end{align*}
	for all $k\geq \bar{k}^{*}$. If (\ref{xxx}) holds at least $N/2$ time-steps in the $i$-th time window, then from triangle inequality, for all $J\subset \left\lbrace 1,\ldots,p\right\rbrace $ with $\card(J)=p-q$ and $a_{J}\neq 0$, we have
	\begin{eqnarray}
	\pi_{J}(k)&\geq&||e_{J}(k)|-|e_{\bar{S}}(k)||\nonumber\\
	&>&3(\epsilon+\gamma_{J}'\bar{m})-\epsilon-\gamma_{\bar{S}}||\tilde{m}_{\bar{S}}||_{k}\nonumber\\
	&>&2(\epsilon+\gamma_{J}'\bar{m}),
	\end{eqnarray}
	for at least $N/2$ time-steps in the $i$-th time window, which implies all of attacked sensors can be isolated by Algorithm \ref{alg:alg_3} in the $i$-th time window.
\end{proof}
\begin{remark}
The performance of Algorithm \ref{alg:alg_3} can be arbitrarily improved by increasing the length of the time window $N$ at the price of increasing the time needed for isolation.
\end{remark}
\vspace{2mm}
\textbf{Example 1}
 Consider the discrete-time nonlinear system subject to measurement noise and sensor attacks:
\begin{eqnarray}
x^{+}&=&\left[ \begin{matrix}\label{e1}
1&\delta\\
0&1
\end{matrix}\right]x+\left[ \begin{matrix}
\frac{1}{2}\delta\alpha \sin (x_{1}+x_{2})\\
\delta\alpha\sin (x_{1}+x_{2})
\end{matrix}\right]
+\left[ \begin{matrix}
\delta u\\
\delta u
\end{matrix}\right], \\
\tilde{y}&=&\left[ \begin{matrix}\label{e2}
3&0.3\\
3&0.6\\
6&0.9\\
1.2&12
\end{matrix}\right] x+a+\tilde{m}.
\end{eqnarray}
with $\delta=0.1$, $\alpha=1$, and $\tilde{m}_{i}\sim\mathcal{U}(-0.5,0.5)$ for $i\in\left\lbrace 1,2,3,4\right\rbrace $. Using the design method proposed in \cite{Yang2018a}, we find that circle-criterion observers of the form (\ref{88}) satisfying Assumption \ref{dd} exist for each subset $J\subset\left\lbrace 1,2,3,4\right\rbrace $ with $\card(J)\geq 1$. Since $p=4$, by Assumption \ref{1001}, the maximum number of attacks is $q=1$. We design a circle-criterion observer for the whole system and for each $J\subset\left\lbrace 1,2,3,4\right\rbrace $ with $\card(J)=3$ and each $S\subset\left\lbrace 1,2,3,4\right\rbrace $ with $\card(S)=2$. Therefore, in total, $\binom{4}{3} + \binom{4}{2}+1 = 11$ observers are designed. We obtain their ISS gains by montecarlo simulations. Theses eleven observers are initialized at $\hat{x}(0)=x(0)$ and $x_{1}(0),x_{2}(0)$ are randomly selected from a standard normal distribution; thus, $\epsilon=0$. We let $N=50,100,200$ and evaluate Algorithm 1 and Algorithm 2 for $1000$ time-steps. For Algorithm \ref{alg:the_alg2}, we let $W=\left\lbrace 2\right\rbrace $, which means the $2$-nd sensor is under attack, and $a_{2}\sim\mathcal{U}(-c,c)$ with $c$ given by $0.7$ and $1$. We run Algorithm \ref{alg:the_alg2} with $\binom{4}{3}+1=5$ observers. The detection results are shown in Figures \ref{fig:15}-\ref{fig:16}. For Algorithm \ref{alg:alg_3}, we let $W=\left\lbrace 3\right\rbrace $ and $a_{3}\sim\mathcal{U}(-d,d)$ with $d$ given by $2$ and $5$. We run Algorithm \ref{alg:alg_3} with $\binom{4}{3}+\binom{4}{2}=10 $ observers. We say sensor $0$ is under attack in the $i$-th window when $\tilde{A}_{i}=\emptyset$. The isolation results are shown below in Figures \ref{fig:1}-\ref{fig:2}.\\[1mm]
\begin{figure}[h]
	\includegraphics[width=0.5\textwidth]{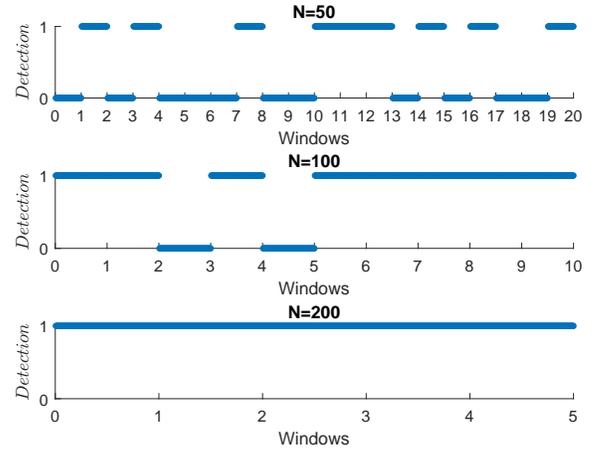}
	\caption{Attack detection, $a_{2}\sim\mathcal{U}(-0.7,0.7)$.}
	\label{fig:15}
	\centering
\end{figure}
\begin{figure}[h]
	\includegraphics[width=0.5\textwidth]{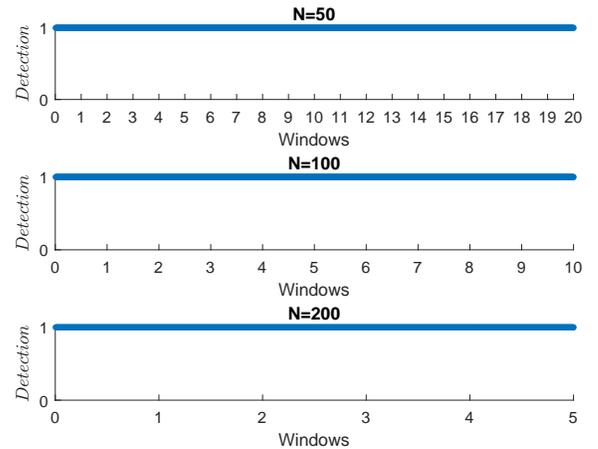}
	\caption{Attack detection, $a_{2}\sim\mathcal{U}(-1,1)$.}
	\label{fig:16}
	\centering
\end{figure}
\begin{figure}[h]
\includegraphics[width=0.5\textwidth]{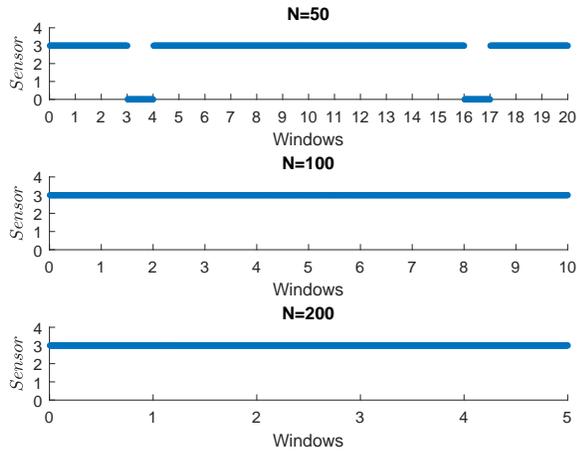}
\caption{Attack isolation, $a_{3}\sim\mathcal{U}(-2,2)$.}
\label{fig:1}
\centering
\end{figure}
\begin{figure}[h]
\includegraphics[width=0.5\textwidth]{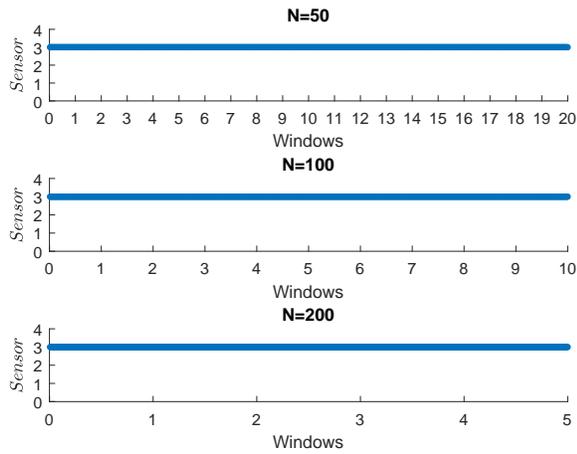}
\caption{Attack isolation, $a_{3}\sim\mathcal{U}(-5,5)$.}
\label{fig:2}
\centering
\end{figure}
\section{Conclusion}

Assuming that a sufficiently small subset of sensors is subject to additive false data injection attacks, we have proposed two algorithms for detecting and isolating sensor attacks for a class of discrete-time nonlinear systems subject to measurement noise using a multi-observer approach. We have provided simulations results to illustrate the performance of the proposed algorithms. The performance of our algorithms can be improved by increasing the length of the time window $N$ at the price of increasing the time needed for detection and isolation.


\bibliographystyle{ieeetr}
\bibliography{Observer} 

%

%
%
%




\end{document}